\documentclass[letterpaper, 10 pt,conference]{ieeeconf}  % Comment this line out if you need a4paper
\usepackage{amssymb}
\usepackage{amsmath,mathrsfs}
\usepackage{color}
\usepackage{graphicx}
\usepackage{epstopdf}
\usepackage{mathtools}
%\mathtoolsset{showonlyrefs,showmanualtags}
\usepackage{lineno,hyperref}
\usepackage{cleveref}
\modulolinenumbers[5]

\newtheorem{theorem}{Theorem}[section]

\newtheorem{remark}{Remark}[section]

\bibliographystyle{elsarticle-num}
\usepackage{multirow}
\usepackage{color}
\usepackage{cite}

\usepackage{fancyhdr}
\usepackage{times}
\usepackage{wrapfig}
\usepackage{pdfpages}
\usepackage{amsfonts}
\usepackage{enumerate}
\usepackage{amssymb}
\usepackage{stmaryrd}
\usepackage{float}

\providecommand{\keywords}[1]{\textbf{\textit{Index terms---}}}

\IEEEoverridecommandlockouts                              % This command is only needed if 
                                                          % you want to use the \thanks command

\overrideIEEEmargins                                      % Needed to meet printer requirements.

% See the \addtolength command later in the file to balance the column lengths
% on the last page of the document

% The following packages can be found on http:\\www.ctan.org
%\usepackage{graphics} % for pdf, bitmapped graphics files
%\usepackage{epsfig} % for postscript graphics files
%\usepackage{mathptmx} % assumes new font selection scheme installed
%\usepackage{times} % assumes new font selection scheme installed
%\usepackage{amsmath} % assumes amsmath package installed
%\usepackage{amssymb}  % assumes amsmath package installed

\title{\LARGE \bf
An asymptotically optimal indirect approach to continuous-time system identification}

\author{Rodrigo A. \text{Gonz\'alez}, Cristian R. Rojas and James S. Welsh % <-this % stops a space
\thanks{This work was supported by the Swedish Research Council under contract number 2016-06079 (NewLEADS).}%
\thanks{R.A. Gonz\'alez and C.R. Rojas are with the Department of Automatic Control and ACCESS Linnaeus Centre, KTH Royal Institute of Technology, 10044 Stockholm, Sweden (e-mails: grodrigo@kth.se, crro@kth.se).}%
\thanks{James S. Welsh is with the School of Electrical Engineering and Computer Science, University of Newcastle, Australia (e-mail: james.welsh@newcastle.edu.au).}%
}
\begin{document}

\maketitle
%\thispagestyle{empty}
%\pagestyle{empty}

%%%%%%%%%%%%%%%%%%%%%%%%%%%%%%%%%%%%%%%%%%%%%%%%%%%%%%%%%%%%%%%%%%%%%%%%%%%%%%%%
\begin{abstract}
The indirect approach to continuous-time system identification consists in estimating continuous-time models by first determining an appropriate discrete-time model. For a zero-order hold sampling mechanism, this approach usually leads to a transfer function estimate with relative degree 1, independent of the relative degree of the strictly proper real system. In this paper, a refinement of these methods is developed. Inspired by indirect PEM, we propose a method that enforces a fixed relative degree in the continuous-time transfer function estimate, and show that the resulting estimator is consistent and asymptotically efficient. Extensive numerical simulations are put forward to show the performance of this estimator when contrasted with other indirect and direct methods for continuous-time system identification.
\end{abstract}
\begin{keywords}
System identification; Continuous-time systems; Parameter estimation; Sampled data.\end{keywords}

\section{Introduction}
System identification deals with the problem of estimating adequate models of dynamical systems from input-output data. The methods developed over the years in this field have seen applications in many areas of science and engineering, and comprehensive literature has been written on the subject \cite{soderstrom1988system,ljung1998system,pintelon2012system}. 

When postulating a mathematical model for describing a dynamical system based on sampled data, one must decide between obtaining a discrete-time (DT), or a continuous-time (CT) model. In a predominantly digital era, DT system identification has been studied thoroughly (see, e.g., \cite{soderstrom1988system,ljung1998system}). Nevertheless, interest in CT models still persists due to its advantages over discrete-time. For example, grey-box modelling \cite{bohlin2006practical}, which is commonly based on physical principles and conservation laws, is naturally suited for continuous time, as the parameters can usually be better interpreted in this domain. Also, CT models are known to have more intuitive dynamics, and they do not depend on a sampling period.

In CT system identification there are two main approaches, namely the direct and indirect approaches. For direct CT system identification, a CT model is obtained directly from the sampled data. The main difficulty present in the direct methods is the handling of derivatives, as they are not immediately available from discrete data points without amplifying noise \cite{rao2006identification}. 
To effectively deal with this issue, many well known methods have been proposed \cite{Garnier2008book}, with success in real applications \cite{garnier2015direct}. On the other hand, indirect methods for CT modelling first determine a suitable DT model via DT system identification methods like the Prediction Error Methods (PEM) or Maximum Likelihood (ML) \cite{ljung1998system}, and then transform this model into a CT equivalent model. Evidence has been shown regarding the advantages of direct over indirect CT model identification \cite{rao2002numerical}, although with a precise initialisation of PEM, the approaches seem comparable for certain sampling periods \cite{ljung2009experiments}.

Even though the indirect approach seems easy to implement, as there is much theory and literature concerning DT system identification, there are reasons why this approach is not always recommended. First, it may suffer from numerical inaccuracies at fast sampling, and requires a precise initialisation. In addition, it is not possible to select the desired numerator order of the CT model, as the estimated DT model will generally lead to a CT model with relative degree 1 in the case of sampling by a zero-order hold mechanism. Hence, an unnecessarily complex model structure is indirectly being estimated, which leads to a loss in accuracy according to the parsimony principle \cite{soderstrom1988system}. 

In this paper, we introduce a method that optimally imposes a desired relative degree in the indirect approach to continuous-time system identification. Based on Indirect PEM \cite{soderstrom1991indirect}, we prove that the proposed estimator is a consistent and asymptotically efficient estimator of the system's true parameter vector. Extensive numerical simulations show that the new method imposes the correct relative degree, while improving the statistical properties of the transfer function estimate, and achieves a performance that compares favourably against both standard direct and indirect approaches. 

The remainder of this paper is organised as follows. In Section \ref{sec2} the problem is formulated. Section \ref{sec3} provides an introduction to the indirect approach for CT system identification. In Section \ref{sec4} we derive an estimator that optimally enforces the desired relative degree for the indirect approach, and determine its properties. Section \ref{sec5} illustrates the method with extensive numerical examples. Finally, conclusions are drawn in Section \ref{sec6}.

\section{Problem formulation}
\label{sec2}
Consider a linear time-invariant, causal, stable, single input single output, CT system
\begin{flalign}
y(t) &= G_0(\rho)u(t) \notag \\
\label{eq1}
&= \frac{\beta_{n-r} \rho^{n-r}+\beta_{n-r-1} \rho^{n-r-1}+\dots+\beta_1 \rho + \beta_0}{\rho^n+\alpha_{n-1} \rho^{n-1}+\dots+\alpha_1 \rho+ \alpha_0}u(t),
\end{flalign}
where $\rho$ is the Heavyside operator, i.e., $\rho g(t) = \text{d}g(t)/\text{d}t$, and $r$ is the relative degree of the system. %The numerator and denominator polynomials of $G_0(\rho)$ are assumed to be relatively prime and hence no zero-pole cancellations occur.% 
In this paper, we denote $\theta_c^0:= [\beta_{n-r} \hspace{0.15cm} \dots \beta_0 \hspace{0.15cm} \alpha_{n-1} \hspace{0.15cm} \dots \hspace{0.15cm} \alpha_0]^\top$ as the true CT system parameter vector. 

Suppose that the input-output signals are sampled with period $h$ and the resulting output is contaminated by an additive zero-mean white noise sequence $\{e(kh)\}_{k\in \mathbb{N}}$ of variance $\sigma^2$. That is,
\begin{equation}
\label{eq2}
y_m(kh) = y(kh)+e(kh), \quad k\in \mathbb{N}.
\end{equation}

%The goal of CT system identification is to obtain a CT transfer function estimate for $G_0(\rho)$, given $N$ discrete input-output data measurements $\{u(kh),y_m(kh)\}_{k=1}^N$ and knowledge about the physical characteristics of the system, or the intersample behaviour. 
%
%Formally it is necessary to know the intersample behaviour of the input signal, so that the CT output signal can be inferred based on the data. In this paper, we assume that the input is piecewise constant between samples (i.e., zero-order hold behaviour \cite{astroem1984computer}).

The goal of CT system identification is to obtain a CT transfer function estimate for $G_0(\rho)$, given $N$ discrete input-output data measurements $\{u(kh),y_m(kh)\}_{k=1}^N$ and knowledge about the physical characteristics of the system, or the intersample behaviour. 
In this paper, we assume that the input is a piecewise constant signal between samples (i.e., zero-order hold behaviour \cite{astroem1984computer}).

For obtaining a model of $G_0(\rho)$, a simple way to proceed is to identify the zero-order hold equivalent model given the input and output data measurements by using standard PEM in the DT domain, and then return to the CT domain via zero-order hold equivalences. Although this procedure has good statistical properties, it does not impose relative degree constraints in the CT domain. Our goal is to optimally impose this constraint, which should lead to a statistically improved estimate of $G(s)$.

\section{The indirect approach to continuous-time system identification}
\label{sec3}
One approach to identifying a CT system is to first estimate the DT model given the input and output data samples, and then translate this model into continuous time. This is called the indirect approach, since it relies on DT system identification theory instead of obtaining immediately a CT model using CT system identification methods.

Much literature has been written regarding the first step of the indirect approach \cite{soderstrom1988system,ljung1998system}. The theoretically optimal solution is to apply the maximum likelihood method. This method is known to give consistent and asymptotically efficient estimates under very general conditions. Under the assumption that the additive white noise is Gaussian, the ML method is equivalent to PEM, which is one of the most celebrated parametric methods, and is available in the MATLAB System Identification Toolbox \cite{ljung2008system}. 

If a CT model for \eqref{eq1} is required, we should propose a DT model structure of the form  
\begin{equation}
\label{hz}
H(z)=\frac{b_{n-1} z^{n-1}+b_{n-2} z^{n-2}+\dots+b_1 z + b_0}{z^n+a_{n-1} z^{n-1}+\dots+a_1 z+ a_0}.
\end{equation}
If we define $\theta_d = [
b_{n-1} \dots  b_0 \hspace{0.15cm} a_{n-1} \dots a_{0}]^\top$ and denote $\hat{y}$ as the model's output, then the ML estimate is
\begin{equation}
\hat{\theta}_d = \arg \min_{\theta_d} \frac{1}{\sigma^2} \sum_{k=1}^N \|y_m(kh)-\hat{y}(kh,\theta_d) \|^2. \notag
\end{equation}
The next step is to transform this DT transfer function into an adequate CT model. This can be done in several ways. For example, the well-known Tustin transformation can be applied on the DT transfer function estimate by letting
\begin{equation}
z = \frac{sh+2}{2-sh}, \notag
\end{equation}
as reported in \cite{unbehauen1990continuous}. If it is assumed that the CT input signal is piecewise constant, the most natural mapping (used in e.g. \cite{sinha2000identification,ljung2009experiments}), is the zero-order hold sampling equivalence
\begin{equation}
\label{zoh}
H_0(z)=(1-z^{-1})\mathcal{Z}\left\{\mathcal{L}^{-1}\left\{\frac{G_0(s)}{s}\right\}\bigg\rvert_{t=kh} \right\},
\end{equation}
where $\mathcal{Z}$ and $\mathcal{L}$ denote the Z and Laplace transforms, respectively. This mapping is known to be a troublesome part of the indirect approach, as it can be ill conditioned and its uniqueness depends on a correct choice of the sampling period. 

In both of these mappings, the resulting CT model can have numerator parameters that exceed the desired numerator orders. This is generally the case when the relative degree of $G_0(s)$ is greater than 1. This problem contributes to poor accuracy and high standard deviations at high frequencies. One very simple way of treating this issue is setting to zero the numerator coefficients which should be zero, but this is not the best way of taking care of the information \cite{ljung2009experiments}. 

\section{Optimal enforcement of relative degree}
\label{sec4}
In this section we develop an indirect-approach estimator for the CT parameter vector $\theta_c^0$ that renders a CT transfer function estimate $\hat{G}(s)$ of a desired relative degree $r$. For this matter, we first focus on the PEM estimator of the zero-order hold equivalent model of $G_0(s)$.

For simplicity, we assume that the correct model order has been found. A model with structure \eqref{hz} is obtained by PEM, and the covariance matrix of $\hat{\theta}_d$ is also estimated. We know that the CT zero-order hold equivalent of this estimated model is in general given by
\begin{equation}
\hat{G}(s) = \frac{\hat{\beta}_{n-1} s^{n-1}+\hat{\beta}_{n-2} s^{n-2}+\dots+\hat{\beta}_1 s + \hat{\beta}_0}{s^n+\hat{\alpha}_{n-1} s^{n-1}+\dots+\hat{\alpha}_1 s+ \hat{\alpha}_0}. \notag
\end{equation} 
Define $\hat{\theta}_c = [\hat{\beta}_{n-1} \hspace{0.15cm} \dots \hat{\beta}_0 \hspace{0.15cm} \hat{\alpha}_{n-1} \hspace{0.15cm} \dots \hspace{0.15cm} \hat{\alpha}_{0}]^\top$, and denote by $\theta_d^0$ the true DT parameter vector. The parameters in $\hat{\theta}_c$ are related to $\hat{\theta}_d$ by the zero-order hold equivalence equations that can be derived by using \eqref{zoh} and comparing coefficients. This relation is a nonlinear mapping $f: \hat{\theta}_c \to f(\hat{\theta}_c) = \hat{\theta}_d$, which is differentiable almost everywhere. Hence, the following asymptotic relationship is valid for the covariance matrices of $\hat{\theta}_d$ and $\hat{\theta}_c$:
\begin{align}
\Sigma_{\hat{\theta}_d} &= \text{E}\{(\hat{\theta}_d-\theta_d^0)(\hat{\theta}_d-\theta_d^0)^\top\} \notag \\
%&= \text{E}\{(g(\hat{\theta}_c)-g(\theta_c^0))(g(\hat{\theta}_c)-g(\theta_c^0))^\top\} \notag \\
&\approx \text{E}\{J(\hat{\theta}_c-\theta_c^0)(\hat{\theta}_c-\theta_c^0)^\top J^\top\} \notag \\
\label{jacobiantrick}
& = J \Sigma_{\hat{\theta}_c} J^\top,
\end{align}
where $J$ is the Jacobian matrix of $f$ evaluated at the naive estimation of $\theta_c^0$, that is, throwing away the high order coefficients of the numerator of $\hat{G}$ which should be zero\footnote{Note that the standard PEM estimate could have also been used for this matter, and the asymptotic relation still holds.}. 

Now, we propose to find an appropriate projection of $\hat{\theta}_c$ into a proper subspace of the parameter space that yields the desired relative degree. This subspace is simply the one formed by all vectors with first $r-1$ elements set to zero. Hence, we decide to study the following problem:
\begin{align}
\label{lagrange1}
\tilde{\theta}_c = \arg &\min_{\theta} \frac{1}{2} (\hat{\theta}_c- \theta)^\top \Sigma_{\hat{\theta}_c}^{-1} (\hat{\theta}_c- \theta) \\
\label{lagrange2}
&\text{s.t. \hspace{0.2cm}} \begin{bmatrix}
\textnormal{I}_{r-1} & 0 
\end{bmatrix}\theta = 0,
\end{align}
where $\textnormal{I}_{r-1}$ is the identity matrix of dimension $r-1$, $0$ is the null matrix of appropriate dimensions, and $\Sigma_{\hat{\theta}_c}^{-1}=J^\top \Sigma_{\hat{\theta}_d}^{-1}J$. The optimisation problem in this context can be interpreted as an application of the Indirect PEM \cite{soderstrom1991indirect}.

By Lagrange multiplier theory, the optimization problem in \eqref{lagrange1} is equivalent to calculating, for a suitable $\lambda$,
\begin{equation}
\tilde{\theta}_c = \arg \min_{\theta} \frac{1}{2}(\hat{\theta}_c- \theta)^\top \Sigma_{\hat{\theta}_c}^{-1} (\hat{\theta}_c- \theta) + \lambda^\top \begin{bmatrix}
\textnormal{I}_{r-1} & 0 
\end{bmatrix}\theta. \notag
\end{equation}
Partitioning $\Sigma_{\hat{\theta}_c}$ appropriately (dropping the subindex for simplicity), and differentiating with respect to $\theta$ we obtain
\begin{equation}
\label{prebla}
\tilde{\theta}_c = \hat{\theta}_c - \begin{bmatrix}
\Sigma_{11} & \Sigma_{12} \\
\Sigma_{21} & \Sigma_{22} 
\end{bmatrix} 
\begin{bmatrix}
\textnormal{I}_{r-1} \\
0 
\end{bmatrix} \lambda,
\end{equation}
Imposing \eqref{lagrange2} we obtain $\lambda = \Sigma_{11}^{-1} \begin{bmatrix}
\textnormal{I}_{r-1} & 0 
\end{bmatrix} \hat{\theta}_c$. If we denote by $C$ the Cholesky factorization matrix of $\Sigma_{\hat{\theta}_c}$ \cite{Horn2012} (i.e., a lower triangular matrix with positive diagonal entries such that $\Sigma_{\hat{\theta}_c} = CC^\top$) we can write \eqref{prebla} as 
\begin{align}
\tilde{\theta}_c &= \hat{\theta}_c - \Sigma 
\begin{bmatrix}
\textnormal{I}_{r-1} \\
0
\end{bmatrix} \Sigma_{11}^{-1} 
\begin{bmatrix}
\textnormal{I}_{r-1} & 0 
\end{bmatrix}
\hat{\theta}_c \notag \\
&= \hat{\theta}_c - \begin{bmatrix}
C_{11} & 0 \\
C_{21} & C_{22} 
\end{bmatrix} 
\begin{bmatrix}
\textnormal{I}_{r-1} \\
0
\end{bmatrix} C_{11}^{-1} 
\begin{bmatrix}
\textnormal{I}_{r-1} & 0 
\end{bmatrix}
\hat{\theta}_c \notag \\
%&= \hat{\theta}_c - \begin{bmatrix}
%C_{11} & 0 \\
%C_{21} & C_{22} 
%\end{bmatrix} 
%\begin{bmatrix}
%0_{r-1}&  0^\top \\
%0 & \textnormal{I}_{2n-r+1}
%\end{bmatrix}
%\begin{bmatrix}
%C_{11}^{-1} & 0 \\
%-C_{22}^{-1}C_{21}C_{11}^{-1} & C_{22}^{-1} 
%\end{bmatrix}
%\hat{\theta}_c \\
%&= \hat{\theta}_c - C
%\begin{bmatrix}
%\textnormal{I}_{r-1} & 0^\top \\
%0 & 0_{2n-r+1}
%\end{bmatrix}
%C^{-1} \hat{\theta}_c \notag \\
\label{bla}
&= C
\begin{bmatrix}
0_{r-1}&  0^\top \\
0 & \textnormal{I}_{2n-r+1}
\end{bmatrix}
C^{-1} \hat{\theta}_c.
\end{align}
That is, the estimator \eqref{bla} can be seen as an $\mathcal{L}^2$ best approximation to the PEM CT estimate $\hat{\theta}_c$ that imposes the desired relative degree.

\subsection{Properties}
We briefly present the most important properties of estimator \eqref{bla} in the following theorems.

\begin{theorem}
\label{theorem1}
Consider the system described by \eqref{eq1} and \eqref{eq2}, where $\{e(kh)\}_{k=1}^N$ is a Gaussian white noise sequence. Assume that the sampling frequency $2\pi/h$ is larger than twice the largest imaginary part of the $s$-domain poles and there is no delay in the real system\footnote{These conditions can be relaxed, as long as the sampling frequency is such that the $z\to s$ transformation is well defined.}. Then, the estimator \eqref{bla} is a consistent and asymptotically efficient estimator of the real vector parameter $\theta_c^0$, provided the DT model set (with the chosen relative degree) contains the real system.
\end{theorem}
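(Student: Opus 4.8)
The plan is to recognize that the estimator \eqref{bla} is an instance of Indirect PEM \cite{soderstrom1991indirect}, and to transfer the consistency and asymptotic efficiency of the discrete-time maximum likelihood estimate $\hat{\theta}_d$ through the zero-order hold reparametrization to the constrained continuous-time estimate. I would proceed in three stages: (i) establish the statistical properties of $\hat{\theta}_d$; (ii) carry them over to the unconstrained CT estimate $\hat{\theta}_c$ via the differentiable bijection $f$; and (iii) analyze the oblique projection in \eqref{bla}, showing it enforces relative degree $r$ while preserving efficiency.

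First I would invoke the standard PEM/ML theory \cite{ljung1998system}. Since the real system has relative degree $r$ and, by the sampling assumption and absence of delay, its zero-order hold equivalent \eqref{zoh} is of the form \eqref{hz}, the DT model set contains the true system, so $\hat{\theta}_d$ is consistent and $\sqrt{N}(\hat{\theta}_d-\theta_d^0)$ is asymptotically Gaussian with covariance equal to the Cram\'er--Rao bound; equivalently $N\Sigma_{\hat{\theta}_d}\to \bar{\Sigma}_d = M_d^{-1}$, where $M_d$ is the Fisher information matrix. This is the step where the Gaussian white-noise assumption of \eqref{eq2} is used. Next I would verify that the ZOH map $f:\theta_c\mapsto\theta_d$ is a smooth bijection in a neighbourhood of $\theta_c^0$: the Nyquist-type bound on the imaginary parts of the $s$-domain poles together with the no-delay assumption guarantee that the $z\to s$ transformation is well defined and differentiable, so the Jacobian $J$ is nonsingular. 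By the delta method and the invariance of ML under reparametrization, $\hat{\theta}_c=f^{-1}(\hat{\theta}_d)$ is consistent for $\theta_c^0$ and $\sqrt{N}(\hat{\theta}_c-\theta_c^0)$ is asymptotically Gaussian with covariance $\bar{\Sigma}_c = J^{-1}\bar{\Sigma}_d J^{-\top}$, the unconstrained CT Cram\'er--Rao bound. This justifies the weighting $\Sigma_{\hat{\theta}_c}^{-1}=J^\top\Sigma_{\hat{\theta}_d}^{-1}J$ appearing in \eqref{lagrange1}, and by the footnote the specific choice of starting point for $\hat{\theta}_c$ is asymptotically immaterial.

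For the projection, write $P:=C\begin{bmatrix}0_{r-1}&0^\top\\0&\textnormal{I}_{2n-r+1}\end{bmatrix}C^{-1}$, so that $\tilde{\theta}_c=P\hat{\theta}_c$ in \eqref{bla}. Because $C$ is lower triangular, the range of $P$ is exactly the constraint subspace $\{\theta:[\textnormal{I}_{r-1}\ 0]\theta=0\}$, $P$ is idempotent, and $P$ is invariant under the scaling $\Sigma_{\hat{\theta}_c}\mapsto\Sigma_{\hat{\theta}_c}/N$, so $P$ converges to the projector $\bar{P}$ defined by $\bar{\Sigma}_c$. Since the true relative degree is $r$, the top $r-1$ numerator coordinates of $\theta_c^0$ vanish, i.e. $\theta_c^0$ lies in this subspace and $\bar{P}\theta_c^0=\theta_c^0$; consistency of $\tilde{\theta}_c$ then follows from that of $\hat{\theta}_c$ and continuity of the projector. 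For efficiency I would use the equivalent form $\tilde{\theta}_c=\hat{\theta}_c-\Sigma A^\top(A\Sigma A^\top)^{-1}A\hat{\theta}_c$ with $A=[\textnormal{I}_{r-1}\ 0]$, which matches \eqref{prebla}. Then $\tilde{\theta}_c-\theta_c^0=\bar{P}(\hat{\theta}_c-\theta_c^0)+o_p(N^{-1/2})$, giving limiting covariance $\bar{\Sigma}_c-\bar{\Sigma}_cA^\top(A\bar{\Sigma}_cA^\top)^{-1}A\bar{\Sigma}_c$. Using the null-space identity $\bar{\Sigma}_c-\bar{\Sigma}_cA^\top(A\bar{\Sigma}_cA^\top)^{-1}A\bar{\Sigma}_c=U(U^\top\bar{\Sigma}_c^{-1}U)^{-1}U^\top$, where the columns of $U$ span $\ker A$, this is precisely the constrained Cram\'er--Rao bound for estimating $\theta_c^0$ subject to \eqref{lagrange2}, establishing asymptotic efficiency.

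The hard part will be the linearization argument in stage (iii). The estimator \eqref{bla} solves the quadratic surrogate \eqref{lagrange1}, whereas the exact Indirect PEM estimator minimizes $(\hat{\theta}_d-f(\theta))^\top\Sigma_{\hat{\theta}_d}^{-1}(\hat{\theta}_d-f(\theta))$ over the constraint set; I must show these coincide to first order, i.e. that the error incurred by replacing $f$ with its linearization $J$ at $\hat{\theta}_c$ is $o_p(N^{-1/2})$ and therefore does not perturb the asymptotic distribution. This reduces to a mean-value/Taylor bound on $f$ that is controlled by the consistency of $\hat{\theta}_c$ and the regularity of the ZOH map, the latter being exactly what the sampling and no-delay assumptions secure. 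Once this equivalence is in place, the consistency and the constrained-CRB efficiency of $\tilde{\theta}_c$ follow from the general Indirect PEM theorem \cite{soderstrom1991indirect}.
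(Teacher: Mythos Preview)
Your proposal is correct and follows the same route as the paper: invoke the invariance principle to carry ML optimality from $\hat\theta_d$ to the unconstrained CT estimate $\hat\theta_c$, then apply the Indirect PEM results of \cite{soderstrom1991indirect} to the constrained projection. The paper's proof is terser---it simply verifies the Indirect PEM hypotheses (identifiability, the linear inclusion $T=[0_{r-1}\ \textnormal{I}_{2n-r+1}]$, and consistency of $\Sigma_{\hat\theta_c}$) and then quotes \cite{soderstrom1991indirect} for \eqref{covariances} and efficiency---whereas you spell out the projector algebra and the null-space identity for the constrained Cram\'er--Rao bound.

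One clarification on your ``hard part'': it is not actually needed. The only nonlinearity is in the ZOH map $f$, and that is disposed of in stage~(ii) by invariance, after which $\hat\theta_c$ is a bona fide ML estimate in the \emph{larger} CT model structure. The subsequent passage from relative degree $1$ to relative degree $r$ is governed by the \emph{linear} map $T$, so \eqref{lagrange1}--\eqref{lagrange2} is the exact Indirect PEM criterion for that step, not a linearized surrogate; no Taylor/mean-value argument is required. The Jacobian $J$ enters only to furnish a consistent weight $\Sigma_{\hat\theta_c}^{-1}=J^\top\Sigma_{\hat\theta_d}^{-1}J$, and any consistent weight suffices for the asymptotic conclusions of \cite{soderstrom1991indirect}.
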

\begin{proof}
Under the Gaussian noise assumption, the DT PEM estimate can be interpreted as the ML estimate. Under the proposed sampling frequency, the $z\to s$ transformation is unique for sufficiently large $N$ \cite{kollar1996equivalence}. Hence, by the invariance principle of ML estimators \cite{zehna1966invariance}, the CT equivalent of the system's parameters is also an ML estimate. 

To prove the theorem, we only require that the assumptions in \cite{soderstrom1991indirect} are satisfied in this scenario, and then directly apply the results obtained in the cited contribution. First, note that the model structure given by this procedure contains the models with the desired relative degree (and the contention is proper if $r>1$), with a linear mapping between parameter vectors given by the matrix
\begin{equation}
T := \begin{bmatrix}
0_{r-1} & \textnormal{I}_{2n-r+1}
\end{bmatrix}. \notag
\end{equation}
Furthermore, provided that the DT model set contains the real system, both structures give parameter identifiability. Also note that $\Sigma_{\hat{\theta}_c}$, obtained via \eqref{jacobiantrick}, is a consistent estimate of the covariance matrix of $\hat{\theta}_c$. Hence, the results in \cite{soderstrom1991indirect} follow. Namely, the normalized estimation errors\footnote{For simplicity, we assume that the vector $\theta_c^0$ has the appropriate dimension, where zeros have been considered in the first terms if necessary.} $\sqrt{N}(\hat{\theta}_c-\theta_c^0)/\sigma$ and $\sqrt{N}(\tilde{\theta}_c-\theta_c^0)/\sigma$ are asymptotically normally distributed with zero means and their asymptotic covariance matrices satisfy the relation
\begin{equation}
\label{covariances}
[\textnormal{AsCov}(\tilde{\theta}_c-\theta_c^0)]^{-1} = T[\textnormal{AsCov}(\hat{\theta}_c-\theta_c^0)]^{-1}T^\top. 
\end{equation}
Moreover, following the steps in \cite[Section 3]{soderstrom1991indirect}, the improved PEM estimate is $\sqrt{N}-$consistent and has the same asymptotic distribution as $\hat{\theta}_c$, thereby proving its asymptotic efficiency.
\end{proof}
\begin{remark}

Note that by \eqref{bla} and \eqref{covariances}, the asymptotic covariances can be shown to satisfy the following properties:
\begin{flalign*}
&\bullet \textnormal{AsCov}(\tilde{\theta}_c-\theta_c^0,\hat{\theta}_c-\tilde{\theta}_c)=0,& \notag \\
&\bullet \textnormal{AsCov}(\tilde{\theta}_c-\theta_c^0)=\textnormal{AsCov}(\hat{\theta}_c-\theta_c^0) - \textnormal{AsCov}(\hat{\theta}_c-\tilde{\theta}_c).& \notag 
\end{flalign*}
Both of these claims follow by applying properties 10.5 and 10.6 from \cite[Chapter 10]{gourieroux1995statistics} to this context. These properties imply that for sufficiently large $N$, the proposed estimator can only decrease the covariance of the estimated parameters compared to standard PEM. The asymptotic covariances satisfy a Pythagorean relation, as the PEM estimate is projected orthogonally on to the proper subspace where $\theta_c^0$ lies. 
\end{remark}

Next, we establish that imposing a larger relative degree improves the accuracy of the estimates, provided that the highest relative degree model structure contains the real system.
\begin{theorem}
Given a plant $G_0(s)$ of order $n$ and relative degree $r>1$, consider CT candidate models of relative degree $r_1$ and $r_2$ and their improved PEM parameter vector estimates ${\tilde{\theta}_c}^{r_1}$ and ${\tilde{\theta}_c}^{r_2}$ respectively. If $r_1<r_2\leq r$, then their asymptotic covariance matrices satisfy $\textnormal{AsCov}(\tilde{\theta}_c^{r_2}) \preceq \textnormal{AsCov}(\tilde{\theta}_c^{r_1})$.
\end{theorem}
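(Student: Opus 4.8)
The plan is to reduce the statement to a Schur-complement inequality between two compressions of a single, common information matrix. First I would place both estimators in the same ambient space: following the footnote convention, I embed $\tilde{\theta}_c^{r_1}$ and $\tilde{\theta}_c^{r_2}$ in $\mathbb{R}^{2n}$ by padding back the constrained (zero) leading numerator coefficients, so that the ordering $\preceq$ is meaningful. Since $r_1<r_2\leq r$, the subspace imposed by relative degree $r_2$ (first $r_2-1$ coefficients zero) is contained in the one imposed by $r_1$, and both contain the true vector $\theta_c^0$, whose first $r-1\geq r_2-1$ numerator coefficients vanish. Hence the hypotheses of Theorem \ref{theorem1} hold for both candidate structures, which yields consistency and the covariance identity \eqref{covariances}. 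Writing $P:=[\textnormal{AsCov}(\hat{\theta}_c-\theta_c^0)]^{-1}$ for the information matrix of the \emph{unconstrained} CT PEM estimate, which is the same object for both choices of relative degree, \eqref{covariances} expresses each constrained inverse covariance as $T_{r_i}PT_{r_i}^\top$, i.e.\ as the principal submatrix of $P$ indexed by the free coordinates of structure $r_i$.

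Next I would partition the coordinates into three blocks: $A$ (indices $1,\dots,r_1-1$, zero in both structures), $B$ (indices $r_1,\dots,r_2-1$, free for $r_1$ but zero for $r_2$), and $C$ (indices $r_2,\dots,2n$, free in both), and partition $P$ accordingly. Then $[\textnormal{AsCov}(\tilde{\theta}_c^{r_1})]^{-1}$ is the $[B,C]$ principal block of $P$, while $[\textnormal{AsCov}(\tilde{\theta}_c^{r_2})]^{-1}=P_{CC}$. Inverting the $[B,C]$ block via the Schur complement $S:=P_{BB}-P_{BC}P_{CC}^{-1}P_{CB}$, and padding both covariances with zeros on block $A$ (and, for $r_2$, also on block $B$), the difference $\textnormal{AsCov}(\tilde{\theta}_c^{r_1})-\textnormal{AsCov}(\tilde{\theta}_c^{r_2})$ restricted to the $[B,C]$ block factors as
\[
\begin{bmatrix} I \\ -P_{CC}^{-1}P_{CB}\end{bmatrix} S^{-1} \begin{bmatrix} I & -P_{BC}P_{CC}^{-1}\end{bmatrix} \succeq 0,
\]
the inequality holding because $S\succ 0$. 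This gives $\textnormal{AsCov}(\tilde{\theta}_c^{r_2})\preceq \textnormal{AsCov}(\tilde{\theta}_c^{r_1})$, as claimed.

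The points needing care — and the main obstacle — are dimensional and definiteness bookkeeping rather than deep analysis. I must justify that $P\succ 0$, so that all the invoked principal submatrices and the Schur complement $S$ are invertible; this follows from identifiability of the full CT structure (guaranteed in Theorem \ref{theorem1} by the model set containing the true system) together with nonsingularity of the Jacobian-transformed information $\Sigma_{\hat{\theta}_c}^{-1}=J^\top\Sigma_{\hat{\theta}_d}^{-1}J$. One should also verify that the zero-padded matrices are genuinely the full-space asymptotic covariances of the embedded estimators, which is consistent with the orthogonality and Pythagorean relations already recorded in the Remark. Conceptually, the result is the standard fact that imposing additional \emph{correct} constraints (fixing block $B$ at its true zero value) cannot increase, and generically strictly decreases, the asymptotic covariance of the remaining estimates; the hypothesis $r_2\leq r$ is precisely what makes these extra constraints correct, and hence the comparison between unbiased estimators legitimate.
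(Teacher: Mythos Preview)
Your argument is correct. The paper itself does not give a self-contained proof of this theorem: it simply states that the result follows from \cite[Theorem~2]{gonzalez2017optimal} and omits the details. Your proposal, by contrast, supplies an explicit Schur-complement computation on the common information matrix $P=[\textnormal{AsCov}(\hat{\theta}_c-\theta_c^0)]^{-1}$, showing directly that the padded covariance difference on the $[B,C]$ block factors as $\bigl[\begin{smallmatrix} I \\ -P_{CC}^{-1}P_{CB}\end{smallmatrix}\bigr] S^{-1}\bigl[\begin{smallmatrix} I & -P_{BC}P_{CC}^{-1}\end{smallmatrix}\bigr]\succeq 0$. This is the standard ``nested correct constraints cannot increase asymptotic covariance'' argument, and your identification of the two constrained inverse covariances with the $[B,C]$ and $C$ principal submatrices of $P$ via \eqref{covariances} is exactly the right reduction. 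The caveats you flag (positive definiteness of $P$, which amounts to invertibility of the Jacobian $J$ under the sampling assumption of Theorem~\ref{theorem1}, and the interpretation of the zero-padded matrices as the ambient-space covariances) are the only points requiring care, and you address them adequately. In short: same conclusion, but you give the actual linear-algebra content that the paper outsources to a citation.
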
 
\begin{proof}
The proof follows by applying \cite[Theorem 2]{gonzalez2017optimal}. Details are omitted due to length restrictions.

\end{proof}

\begin{remark}
The relative degree of the CT system is not always known. In some practical applications, physical knowledge about the system can give intuition. In addition, statistical measures can be used such as the coefficient of determination \cite{garnier2015direct}, or the Young Information Criterion \cite{young1989recursive}.
\end{remark}

\section{Monte Carlo simulation studies}
\label{sec5}
We will now study the performance of the proposed estimator under a series of experiments.

The system considered is the Rao-Garnier system \cite{rao2002numerical}, which is a linear fourth-order non-minimum phase system with complex poles that has been tested in many publications (see e.g. \cite{ljung2003initialisation,welsh2009continuous,garnier2015direct}) of CT system identification:
\begin{equation}
G(s) = \frac{-6400s+1600}{s^4+5s^3 + 408s^2 + 416s+1600}. \notag
\end{equation}
This system is interesting since it has two damped oscillatory modes at 2 and 20 rad/sec with damping of $0.1$ and $0.25$ respectively, and has a non-minimum phase zero at $s=0.25$. It is known that this is a particularly difficult system to estimate by PEM/ML, since these methods may converge to a local minimum if they are not well initialised \cite{ljung2003initialisation}.

Three methods have been compared: PEM, PEM with relative degree enforcement (labeled PEMrd \eqref{bla}), and the simplified refined IV method for continuous-time systems (SRIVC), which is one of the most successful direct methods available, and has been suggested for general use in one of the most recent surveys on CT system identification \cite{garnier2015direct}. Each method has been tested in $M$ different Monte Carlo simulations, and they have been evaluated according to the average normalized square error of the system estimate
\begin{equation}
\textnormal{MSE} \hspace{0.1cm}\hat{G} = \frac{1}{M} \sum_{i=1}^M \frac{\|\hat{G}_i-G_0\|_2^2}{\|G_0\|_2^2}, \notag
\end{equation}
the average normalized square error of the parameter vectors
\begin{equation}
\textnormal{MSE}\hspace{0.1cm} \hat{\theta} = \frac{1}{M} \sum_{i=1}^M \frac{\|\hat{\theta}_c^i-\theta_c^0\|_2^2}{\|\theta_c^0\|_2^2}, \notag
\end{equation}
and the fit measure
\begin{equation}
\textnormal{Fit} = \frac{100}{M}\sum_{i=1}^M \left[1- \frac{\|\hat{y}^i-y\|_2}{\|y-\bar{y}\|_2} \right], \notag
\end{equation}
where $y$ is the noise-free output sequence (the simulated data without the additive measurement noise), $\hat{y}^i$ is the simulated output sequence of the $i$-th estimated model, and $\bar{y}$ is the average value of $\{y(kh)\}_{k=1}^N$.

We have run PEM using the standard MATLAB System Identification Toolbox \cite{ljung2008system} with the \texttt{oe} command, and assumed that the correct order of the system is known. The search algorithm has been initialised with the estimate given by the Null Space Fitting method\footnote{PEM initialised with the estimate from SRIVC (as in \cite{ljung2009experiments}) has also been tested with similar results.} \cite{galrinho2017weighted}. We based PEMrd on the PEM estimate previously obtained. The required Jacobian matrix has been numerically calculated via finite differences, and the correct relative degree has been imposed for this estimator. We have used the command \texttt{d2c} of MATLAB in both cases. SRIVC has been implemented with the CONTSID toolbox for MATLAB \cite[Chapter 9]{Garnier2008book} with default initialisation, and has been set to estimate the model
\begin{equation}
\hat{G}(s) = \frac{\beta_1 s + \beta_0}{s^4 + \alpha_3 s^3+\alpha_2 s^2+\alpha_1 s+\alpha_0}. \notag
\end{equation}

\subsection{Effect of the number of data points and sampling rate}
\label{subsec1}
In this study, we have designed the input as a pseudorandom binary sequence (PRBS) of amplitude switches between $0$ and $2$. For the first input sequence, the number of stages of the shift register is $n=10$ and the data length of the shortest interval is $p=7$. Hence, a sequence of $N=7161$ data points has been obtained. The noise is a zero-mean white Gaussian noise signal, where the variance has been set such that the signal-to-noise ratio (SNR in dB) between the noiseless output sequence and the noise equals $10$ dB.

Three Monte Carlo studies have been performed with $M=500$ simulations of different noise realisations each one, for $h=0.01, 0.05, 0.1$. The results are shown in Table \ref{table1}. 
\begin{table}[h!]
\centering
\caption{Monte Carlo simulation results for PRBS input of total length $N=7161$.}
\label{table1}
\begin{tabular}{|c||c|c|c|c|}
\hline
$h$                    & Method & $\textnormal{MSE} \hspace{0.1cm} \hat{G}$ & $\textnormal{MSE} \hspace{0.1cm} \hat{\theta}$ & Fit     \\ \hline
\hline
\multirow{3}{*}{0.01} & PEM                         & $1.113\cdot 10^{-4}$              & $6.757\cdot 10^{-5}$                        & 98.9742 \\
                      & PEMrd                       & $0.732\cdot 10^{-4}$              & $4.283\cdot 10^{-5}$                        & 99.1219 \\
                      & SRIVC                       & $0.733\cdot 10^{-4}$              & $4.292\cdot 10^{-5}$                        & 99.1217 \\ \hline
\multirow{3}{*}{0.05} & PEM                         & $1.996\cdot 10^{-3}$              & $5.423\cdot 10^{-4}$                        & 98.9645 \\
                      & PEMrd                       & $1.406\cdot 10^{-3}$              & $3.933\cdot 10^{-4}$                        & 99.1141 \\
                      & SRIVC                       & $1.397\cdot 10^{-3}$              & $3.925\cdot 10^{-4}$                        & 99.1146 \\ \hline
\multirow{3}{*}{0.1}  & PEM                         & $3.275\cdot 10^{-3}$              & $7.937\cdot 10^{-4}$                        & 98.9436 \\
                      & PEMrd                       & $1.914\cdot 10^{-3}$              & $4.797\cdot 10^{-4}$                        & 99.0884 \\
                      & SRIVC                       & $1.922\cdot 10^{-3}$              & $4.749\cdot 10^{-4}$                        & 99.0870 \\ \hline
\end{tabular}
\end{table}

In order to analyse the performance of each estimator under less data, we have set the number of stages to $n=9$ and the data length of the shortest interval to $p=3$, resulting in a input of $N=1533$ data points. With the same SNR as the test above, the results for $500$ Monte Carlo simulations for each sampling period can be found in Table \ref{table2}.
\begin{table}[h!]
\centering
\caption{Monte Carlo simulation results for PRBS input of total length $N=1533$.}
\label{table2}
\begin{tabular}{|c||c|c|c|c|}
\hline

$h$                    & Method & $\textnormal{MSE} \hspace{0.1cm} \hat{G}$ & $\textnormal{MSE} \hspace{0.1cm} \hat{\theta}$ & Fit     \\ \hline
\hline
\multirow{3}{*}{0.01} & PEM                         & $5.882\cdot 10^{-4}$              & $3.651\cdot 10^{-4}$                        & 97.7791 \\
                      & PEMrd                       & $4.025\cdot 10^{-4}$              & $2.057\cdot 10^{-4}$                        & 98.0705 \\
                      & SRIVC                       & $4.017\cdot 10^{-4}$              & $2.060\cdot 10^{-4}$                        & 98.0719 \\ \hline
\multirow{3}{*}{0.05} & PEM                         & $7.431\cdot 10^{-4}$              & $4.539\cdot 10^{-4}$                        & 97.7892 \\
                      & PEMrd                       & $4.316\cdot 10^{-4}$              & $2.788\cdot 10^{-4}$                        & 98.1172 \\
                      & SRIVC                       & $4.319\cdot 10^{-4}$              & $2.789\cdot 10^{-4}$                        & 98.1167 \\ \hline
\multirow{3}{*}{0.1}  & PEM                         & $1.959\cdot 10^{-2}$              & $2.915\cdot 10^{-3}$                        & 97.7416 \\
                      & PEMrd                       & $1.077\cdot 10^{-2}$              & $2.184\cdot 10^{-3}$                        & 98.0339 \\
                      & SRIVC                       & $9.992\cdot 10^{-3}$              & $3.877\cdot 10^{-3}$                        & 97.9754 \\ \hline
\end{tabular}
\end{table}

Both Tables \ref{table1} and \ref{table2} show that the refined PEM estimator statistically improves the estimates given by PEM, and is a very competitive method against SRIVC, even for high frequency sampling. Note that under less data points, PEMrd still outperforms PEM for every sampling period, which indicates that the asymptotic properties studied in Section \ref{sec4} can be observed in practical finite data cases as well.
\begin{remark}
In Tables \ref{table1} and \ref{table2} we have discarded cases where PEM has delivered estimates with one pole in the negative real axis. Fortunately this scenario is very uncommon (9 cases seen in 3000 simulations). A similar phenomenon was observed in Table \ref{table2} for $h=0.1$ in SRIVC, where 2 estimates gave negative fit values. These simulations were not considered either.
\end{remark}

\subsection{Multisine input}
\label{sec5multisine}
A different input signal has also been tested. We have taken 2000 data measurements of a multisine input given by the sum of sine waves of angular frequencies $\omega = 0.5,1,5,8,10,12,15,20,25,30$. The standard deviation of the additive noise has been set to 0.1, and the median of $500$ Monte Carlo simulations of the normalized model error, normalized parameter error, and fit have been obtained. The results are shown in Table \ref{table3}. 

\begin{table}[h!]
\centering
\caption{Monte Carlo simulation results for multisine wave input of total length $N=2000$.}
\label{table3}
\begin{tabular}{|c||c|c|c|c|}
\hline
$h$                    & Method & $\frac{\|\hat{G}-G_0\|^2}{\|G_0\|^2}$ &  $\frac{\|\hat{\theta}-\theta_0\|^2}{\|\theta_0\|^2}$ & Fit     \\ \hline
\hline
\multirow{3}{*}{0.01} & PEM                         & $8.799\cdot 10^{-5}$              & $7.269\cdot 10^{-5}$                        & 99.4319 \\
                      & PEMrd                       & $1.352\cdot 10^{-5}$              & $4.435\cdot 10^{-6}$                        & 99.8173 \\
                      & SRIVC                       & $1.439\cdot 10^{-1}$              & $6.205\cdot 10^{-2}$                        & 80.8446 \\ \hline
\multirow{3}{*}{0.02} & PEM                         & $1.422\cdot 10^{-4}$              & $8.368\cdot 10^{-5}$                        & 99.2684 \\
                      & PEMrd                       & $2.294\cdot 10^{-5}$              & $1.013\cdot 10^{-5}$                        & 99.7325 \\
                      & SRIVC                       & $2.141\cdot 10^{-2}$              & $4.599\cdot 10^{-3}$                        & 94.3144 \\ \hline
\end{tabular}
\end{table}
For this input and sampling method, SRIVC performs poorly, while PEM and PEMrd normally reach the global optimum. Even though the fit is very near the optimal, PEMrd is consistently better than standard PEM at all metrics.

\subsection{Mean and covariance of the estimated parameters}
As established in Theorem \ref{theorem1}, the improved PEM estimate should reduce (at least asymptotically) the covariance of the parameter vector. To test this, we have obtained the mean and standard deviation of each parameter given by the Monte Carlo study under the setup of Section \ref{subsec1} with $h=0.05$ and $N=7161$. The results are shown in Table \ref{table4}. It can be observed that the mean values are similar in all methods (except standard PEM, which does not estimate the correct model structure), and PEMrd provides the lowest standard deviation for every parameter.

\begin{table*}[t!]
\centering
\caption{Estimated parameter value means and standard deviations for each method, considering $h=0.05$.}
\label{table4}
\begin{tabular}{|c|c|c|c|c|c|c|c|c|c|}
\hline
\multirow{2}{*}{Method} & \multirow{2}{*}{\begin{tabular}[c]{@{}c@{}}Parameter\\ True value\end{tabular}} & \multirow{2}{*}{\begin{tabular}[c]{@{}c@{}}$b_1$\\ 0\end{tabular}}      & \multirow{2}{*}{\begin{tabular}[c]{@{}c@{}}$b_2$\\ 0\end{tabular}}      & \multirow{2}{*}{\begin{tabular}[c]{@{}c@{}}$b_3$\\ -6400\end{tabular}}     & \multirow{2}{*}{\begin{tabular}[c]{@{}c@{}}$b_4$\\ 1600\end{tabular}}    & \multirow{2}{*}{\begin{tabular}[c]{@{}c@{}}$a_1$\\ 5\end{tabular}}     & \multirow{2}{*}{\begin{tabular}[c]{@{}c@{}}$a_2$\\ 408\end{tabular}}   & \multirow{2}{*}{\begin{tabular}[c]{@{}c@{}}$a_3$\\ 416\end{tabular}}    & \multirow{2}{*}{\begin{tabular}[c]{@{}c@{}}$a_4$\\ 1600\end{tabular}}    \\
                        &                                                                                 &                                                                         &                                                                         &                                                                            &                                                                          &                                                                        &                                                                        &                                                                         &                                                                          \\ \hline
\multirow{2}{*}{PEM}    & \multirow{2}{*}{\begin{tabular}[c]{@{}c@{}}Mean\\ Std. Dev\end{tabular}}        & \multirow{2}{*}{\begin{tabular}[c]{@{}c@{}}-0.044\\ 0.963\end{tabular}} & \multirow{2}{*}{\begin{tabular}[c]{@{}c@{}}0.301\\ 11.414\end{tabular}} & \multirow{2}{*}{\begin{tabular}[c]{@{}c@{}}-6403.01\\ 147.13\end{tabular}} & \multirow{2}{*}{\begin{tabular}[c]{@{}c@{}}1601.33\\ 47.85\end{tabular}} & \multirow{2}{*}{\begin{tabular}[c]{@{}c@{}}5.006\\ 0.399\end{tabular}} & \multirow{2}{*}{\begin{tabular}[c]{@{}c@{}}408.16\\ 7.98\end{tabular}} & \multirow{2}{*}{\begin{tabular}[c]{@{}c@{}}416.25 \\ 9.27\end{tabular}} & \multirow{2}{*}{\begin{tabular}[c]{@{}c@{}}1600.49\\ 33.29\end{tabular}} \\
                        &                                                                                 &                                                                         &                                                                         &                                                                            &                                                                          &                                                                        &                                                                        &                                                                         &                                                                          \\ \hline
\multirow{2}{*}{PEMrd}  & \multirow{2}{*}{\begin{tabular}[c]{@{}c@{}}Mean\\ Std. Dev.\end{tabular}}       & \multirow{2}{*}{\begin{tabular}[c]{@{}c@{}}0\\ 0\end{tabular}}          & \multirow{2}{*}{\begin{tabular}[c]{@{}c@{}}0\\ 0\end{tabular}}          & \multirow{2}{*}{\begin{tabular}[c]{@{}c@{}}-6397.25\\ 122.39\end{tabular}} & \multirow{2}{*}{\begin{tabular}[c]{@{}c@{}}1599.79\\ 42.39\end{tabular}} & \multirow{2}{*}{\begin{tabular}[c]{@{}c@{}}4.994\\ 0.315\end{tabular}} & \multirow{2}{*}{\begin{tabular}[c]{@{}c@{}}407.88\\ 7.11\end{tabular}} & \multirow{2}{*}{\begin{tabular}[c]{@{}c@{}}415.84\\ 8.33\end{tabular}}  & \multirow{2}{*}{\begin{tabular}[c]{@{}c@{}}1599.27\\ 28.59\end{tabular}} \\
                        &                                                                                 &                                                                         &                                                                         &                                                                            &                                                                          &                                                                        &                                                                        &                                                                         &                                                                          \\ \hline
\multirow{2}{*}{SRIVC}  & \multirow{2}{*}{\begin{tabular}[c]{@{}c@{}}Mean\\ Std. Dev.\end{tabular}}       & \multirow{2}{*}{\begin{tabular}[c]{@{}c@{}}0\\ 0\end{tabular}}          & \multirow{2}{*}{\begin{tabular}[c]{@{}c@{}}0\\ 0\end{tabular}}          & \multirow{2}{*}{\begin{tabular}[c]{@{}c@{}}-6399.19\\ 132.05\end{tabular}} & \multirow{2}{*}{\begin{tabular}[c]{@{}c@{}}1600.49\\ 44.21\end{tabular}} & \multirow{2}{*}{\begin{tabular}[c]{@{}c@{}}5.014\\ 0.338\end{tabular}} & \multirow{2}{*}{\begin{tabular}[c]{@{}c@{}}407.99\\ 7.75\end{tabular}} & \multirow{2}{*}{\begin{tabular}[c]{@{}c@{}}416.61\\ 8.98\end{tabular}}  & \multirow{2}{*}{\begin{tabular}[c]{@{}c@{}}1599.71\\ 31.1\end{tabular}}  \\
                        &                                                                                 &                                                                         &                                                                         &                                                                            &                                                                          &                                                                        &                                                                        &                                                                         &                                                                          \\ \hline
\end{tabular}
\end{table*}

\subsection{Direct comparison with standard PEM}
In this subsection we analyse the improvements of the novel method over the standard PEM estimates.

To show the impact of selecting and enforcing the correct relative degree, we have focused on the Bode Diagrams of the resulting models for PEM and PEMrd. In Figure \ref{fig1} we have plotted the frequency response estimates of 100 Monte Carlo simulations under the setup in Section \ref{subsec1}, for $h=0.05$, $N=7161$.

It is clear by Fig. \ref{fig1} that the improvement over PEM is mainly at high frequencies. This is intuitive, since the relative degree determines the asymptotic slope of the Bode diagram of magnitude. The proposed method enforces the true asymptotic slope, leading to an important gain in accuracy in both magnitude and phase. 
\begin{figure}[h!]
\centering{
\includegraphics[clip, trim=4.3cm 10.2cm 4cm 10.6cm, width=0.5\textwidth]{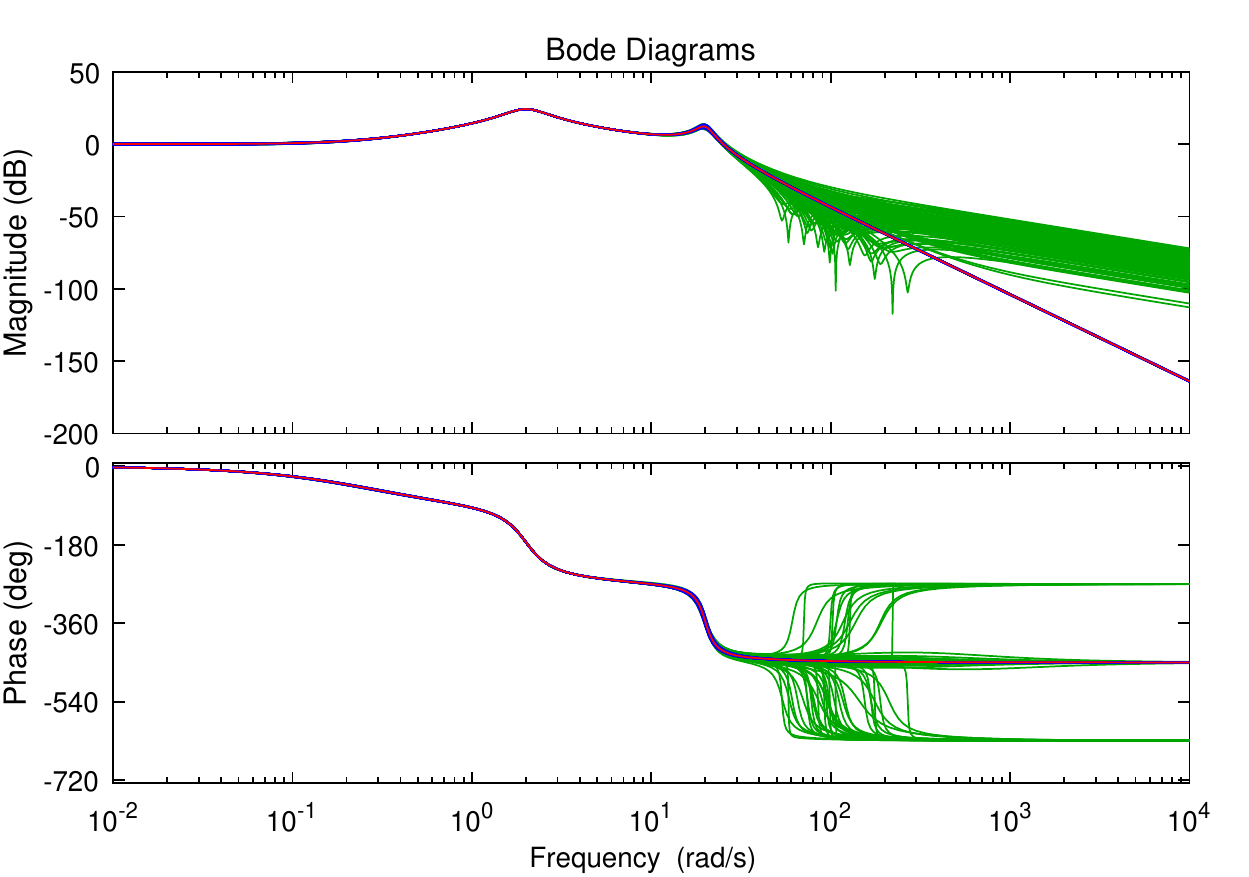}
\caption{Bode Diagram of 100 estimates by PEM (green), 100 estimates by PEMrd (blue), and the real system (red).}
\label{fig1}}
\end{figure}

Also, direct comparison plots between PEM and PEMrd are shown in Figure \ref{fig2}. These plots compare the normalized model and parameter error, and the fit of PEM and PEMrd for each Monte Carlo experiment (500 in total). PEMrd outperforms standard PEM in most Monte Carlo simulations, specially in the fit comparison, where 496 out of 500 experiments have lead to an increase in fit under PEM refinement.
\begin{figure}[h!]
\centering{
\includegraphics[trim=3.5cm 8cm 3.5cm 7.5cm, width=0.5\textwidth]{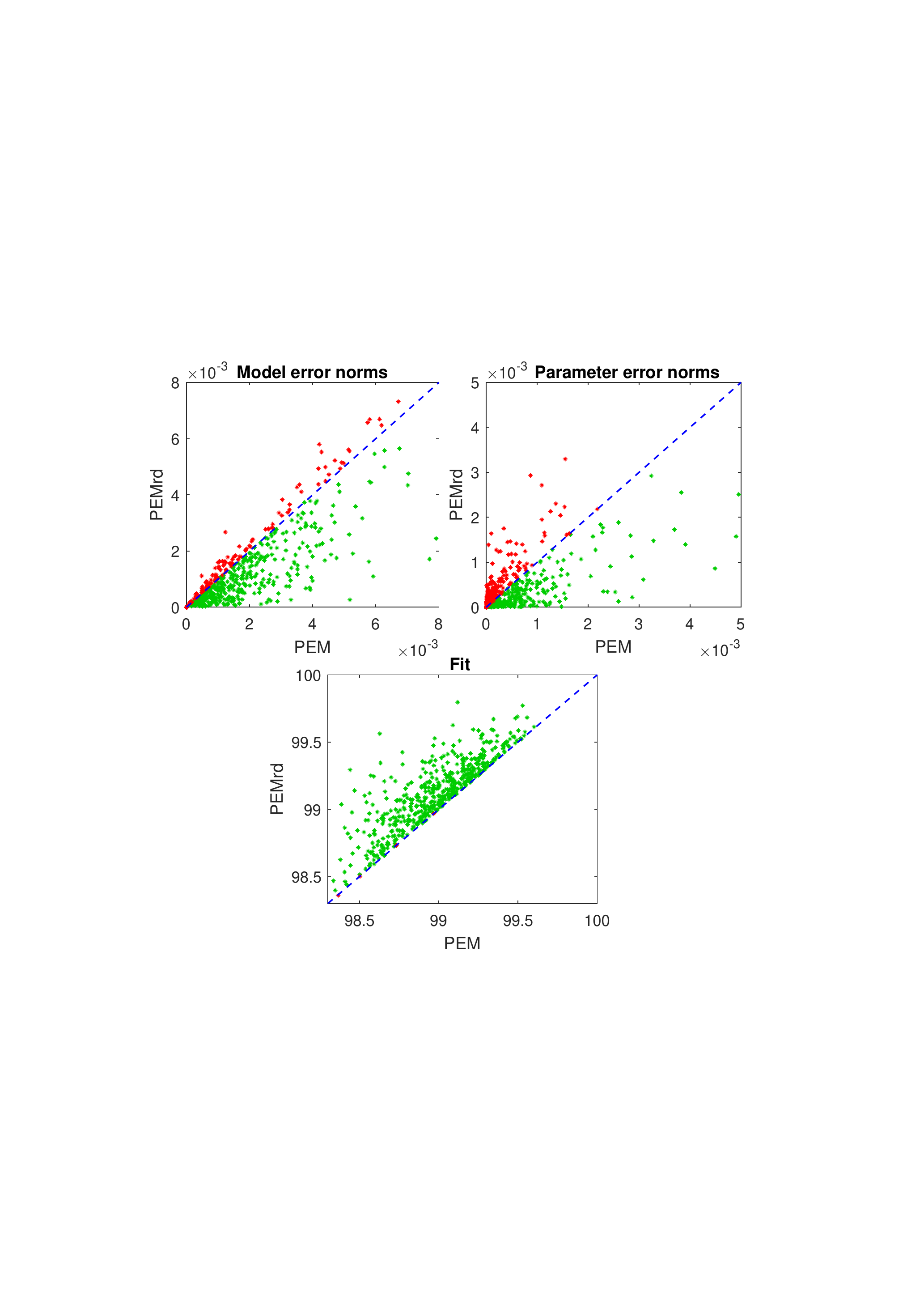}
\caption{Direct comparison plots between PEM and PEMrd for 500 Monte Carlo simulations. The green dots correspond to Monte Carlo simulations where PEMrd outperforms PEM. Red dots represent the opposite, and the dashed blue line is the separatrix.}
\label{fig2}}
\end{figure} 

\subsection{Random systems}
To obtain the average performance of each estimator, we have tested them on a data set created with 500 random systems of order 3 and relative degree 2, by using the \texttt{rss} command in MATLAB. The slowest pole has been set to have real part less than -0.1. The input was a unit variance Gaussian white noise, and the additive white noise was also Gaussian, of standard deviation equal to $5\%$ of the maximum value of the noiseless output. The sampling period has been chosen as 10 times faster than the fastest pole or zero of the real system.

We have computed the median of the metrics used above, as in Section \ref{sec5multisine}, for the 500 random systems. We also have counted failures of the estimators, which are the cases when the estimates produced a negative fit, when the algorithm crashed, or when it was not possible to correctly initialise the PEM estimate having tried reducing by a factor of 2 the sampling rate for initialisation estimation. 

The results can be seen in Table \ref{table5}. Although all estimators report failures, the PEMrd again shows promising results in all metrics. As pointed out in \cite{ljung2009experiments} and seen in these tests, initialisation aspects are in fact a major issue concerning the reliability of these algorithms.
\begin{table}[t!]
\centering
\caption{Monte Carlo simulation results for random systems.}
\label{table5}
\begin{tabular}{|c||c|c|c|c|}
\hline
Method & $\frac{\|\hat{G}-G_0\|^2}{\|G_0\|^2}$ & $\frac{\|\hat{\theta}-\theta_0\|^2}{\|\theta_0\|^2}$ & Fit     & \multicolumn{1}{l|}{N$^\circ$ Failures} \\ \hline
\hline
PEM    & $1.449\cdot 10^{-4}$                  & $3.328\cdot 10^{-3}$                                 & 98.871  & 12                           \\
PEMrd  & $1.046\cdot 10^{-4}$                  & $1.872\cdot 10^{-3}$                                 & 98.997  & 16                           \\
SRIVC  & $1.111\cdot 10^{-4}$                  & $2.126\cdot 10^{-3}$                                 & 98.9723 & 30            \\         
\hline     
\end{tabular}
\end{table}
\section{Conclusions}
\label{sec6}
We have proposed a refinement to the standard PEM estimator for indirect continuous-time system identification that achieves an asymptotically optimal desired relative degree enforcement. An explicit expression for this estimator has been found, and its statistical properties have been analysed. Extensive simulations using both standard benchmarks and random systems have been put forward with promising results. These have shown that the refinement to standard PEM leads to an important improvement in all the statistical metrics studied, and its performance is comparable, if not better, to SRIVC for all sampling periods in this study provided PEM is initialised correctly.

\bibliography{References}
\end{document}